\documentclass[conference]{IEEEtran}
\IEEEoverridecommandlockouts
\usepackage{cite}
\usepackage{amsmath,amssymb,amsfonts,amsthm}
\usepackage{algorithmic}
\usepackage{graphicx}
\usepackage{textcomp}
\usepackage{subcaption}
\usepackage{xcolor}
\usepackage{soul}
\usepackage{multirow}
\usepackage{makecell}
\usepackage{booktabs,siunitx}
\def\BibTeX{{\rm B\kern-.05em{\sc i\kern-.025em b}\kern-.08em
    T\kern-.1667em\lower.7ex\hbox{E}\kern-.125emX}}

\newtheorem{thm}{Theorem}

\makeatletter
\newcommand{\linebreakand}{%
  \end{@IEEEauthorhalign}
  \hfill\mbox{}\par
  \mbox{}\hfill\begin{@IEEEauthorhalign}
}
\makeatother
    
\begin{document}

\title{Deep Pseudo-Proximal Map: A Self-Supervised Data-Fitting Agent for Iterative Reconstruction
\thanks{This manuscript has been authored in part by UT-Battelle, LLC, under contract DE-AC05-00OR22725
with the US Department of Energy (DOE). The publisher acknowledges the US government license to provide public access under the DOE Public Access Plan (https://energy.gov/doe-publicaccess-plan).}
}

\author{\IEEEauthorblockN{Haley Duba-Sullivan}
\IEEEauthorblockA{
\textit{Purdue University}\\
West Lafayette, IN, USA \\
hduba@purdue.edu}
\and
\IEEEauthorblockN{Emma J. Reid}
\IEEEauthorblockA{
\textit{Oak Ridge National Laboratory}\\
Oak Ridge, TN, USA \\
reidej@ornl.gov} 
\linebreakand 
\IEEEauthorblockN{Charles A. Bouman}
\IEEEauthorblockA{
\textit{Purdue University}\\
West Lafayette, IN, USA \\
bouman@purdue.edu}
\and
\IEEEauthorblockN{Gregery T. Buzzard}
\IEEEauthorblockA{
\textit{Purdue University}\\
West Lafayette, IN, USA \\
buzzard@purdue.edu}}
 
\maketitle

\begin{abstract}
Iterative algorithms for inverse imaging problems split reconstruction into alternating subproblems, one of which requires evaluating the data-fitting proximal map. 
In many practical cases, this proximal map has no analytic solution and so must be approximated by an inner loop solver at every iteration, compounding the cost of the outer reconstruction loop.
To address this, we propose the pseudo-proximal map (PPM), a reformulation of the data-fitting proximal map as the minimum mean square error estimate of a synthetic probabilistic model.
We implement the deep PPM as a self-supervised neural network trained only on sampled Gaussian noise, requiring no ground-truth training images.
The deep PPM can be trained for any operator for which the forward model $A$ and its transpose $A^T$ can be evaluated, with provable equivalence to the proximal map when $A$ is linear.
We validate the deep PPM on Gaussian deblurring and $4\times$ super-resolution, where the proximal map has an analytic solution, and on X-ray computed tomography (XCT), where no analytic solution exists.
Used as the data-fitting agent within an iterative reconstruction method, the deep PPM reproduces the reference reconstruction to within 1\% NRMSE for all three operators, and for XCT, it replaces the inner conjugate-gradient loop with a single network evaluation that is $18\times$ faster.
\end{abstract}

\begin{IEEEkeywords}
Proximal map, inverse imaging problems, plug-and-play, consensus equilibrium, self-supervised learning
\end{IEEEkeywords}

\section{Introduction}

Reconstructing images from noisy and incomplete measurements is a central problem in computational imaging. 
Iterative reconstruction methods such as ADMM~\cite{boyd2011distributed}, plug-and-play (PnP)~\cite{venkatakrishnan2013plug,buzzard2018plug}, and multi-agent consensus equilibrium (MACE)~\cite{sridhar2020distributed} rely on repeatedly evaluating a data-fitting proximal map.
In many practical settings, this proximal map does not admit an analytic solution. 
For example, in sparse-view computed tomography (CT) the projection operator is large and geometry dependent~\cite{sidky2008}, while ptychography and holography involve forward models based on coherent wave propagation~\cite{rodenburg2008,javidi2021}.
In these settings, the proximal map is often approximated using iterative solvers such as conjugate gradient (CG)~\cite{hestenes1952methods} or gradient descent (GD)~\cite{nocedal2006numerical} at each iteration of the reconstruction algorithm, creating inner loops that compound computational cost.

Learning-based approaches have been proposed to reduce the computational cost of inverse imaging.
Adaptive reconstruction networks~\cite{gossard2024,terris2025} incorporate knowledge of the forward operator, but require training across families of measurement models using paired measurements and ground-truth images.
Learned prior proximal operators~\cite{fang2023} approximate the proximal map of an implicit regularizer, addressing the image prior rather than the data-fitting term.
The learned data-fitting solver proposed in~\cite{fan2017} approximates the matrix inversion arising in the ADMM data-fitting update with a learned network.

In this paper, we propose the pseudo-proximal map (PPM), a principled reformulation of the data-fitting proximal map as the minimum mean squared error (MMSE) estimate of a synthetic probabilistic model that is provably equivalent when $A$ is linear.
With this reformulation, we train the deep PPM using only sampled Gaussian noise and a black-box implementation of the forward model $A$ and its transpose $A^T$, taking the form of a residual correction to the input rather than an approximate inverse as in previous work.
We validate the deep PPM on Gaussian deblurring and super-resolution (SR) where analytic solutions to the proximal map are available for direct comparison.
We also test the deep PPM on X-ray CT (XCT) reconstruction, which does not admit an analytic solution.
Additionally, we show that MACE reconstructions using the deep PPM closely match the reference reconstruction to within 1\% NRMSE for all three operators, while replacing the inner CG loop for XCT with a single network evaluation that is $18\times$ faster.

\section{Deep Pseudo-Proximal Map}

Inverse imaging problems require inverting a physical measurement system modeled as
\begin{equation} \label{eq:forward_model}
    y = A(x) + \sigma W,
\end{equation}
where $A: \mathbb{R}^N \to \mathbb{R}^M$ is the forward model, $\sigma > 0$ is the measurement noise standard deviation, and $W \sim \mathcal{N}(0, I)$ is additive white Gaussian noise (AWGN).
The data-fitting proximal map for this forward model is defined for $\gamma > 0$ as
\begin{equation}\label{eq:pm}
    F_{\sigma, \gamma}(v; y)
    = \arg\min_x \left\{ \frac{1}{2\sigma^2}\|y - A(x)\|^2
    + \frac{1}{2\gamma^2}\|x - v\|^2 \right\},
\end{equation}
where $v \in \mathbb{R}^N$ is any input image.

The key observation motivating our proposed PPM is that $F_{\sigma, \gamma}(v; y)$ can be interpreted as the maximum a posteriori (MAP) estimate under a Gaussian prior centered at $v$.
To see this, consider the synthetic probabilistic model given by
\begin{equation}\label{eq:synth_sys}
    \tilde{X} = v + \gamma\tilde{\mathcal{E}}; \quad
    \tilde{Y} = A(\tilde{X}) + \sigma\tilde{W},
\end{equation}
where $\tilde{\mathcal{E}}, \tilde{W} \sim \mathcal{N}(0, I)$ are independent AWGN vectors.
Under this synthetic model, $F_{\sigma, \gamma}(v; y)$ is the MAP estimate of $\tilde{X}$ given $\tilde{Y}=y$, i.e.,
\begin{equation}
    F_{\sigma,\gamma}(v;y)
    = \arg\min_x \{ -\log p(x \mid \tilde{Y}=y)\}.
\end{equation}
We define the PPM $\tilde{F}_{\sigma, \gamma}(v; y)$ as the corresponding conditional expectation of $\tilde{X}$ given $\tilde{Y}=y$, i.e., 
\begin{equation}\label{eq:cond-exp}
    \tilde{F}_{\sigma, \gamma}(v; y)
    = \mathbb{E}_{v,\gamma}[\tilde{X} \mid \tilde{Y} = y].
\end{equation}
When $A$ is linear, the posterior is Gaussian, so $F_{\sigma,\gamma}(v;y)$ and $ \tilde{F}_{\sigma, \gamma}(v; y)$ are equivalent, as formalized in Theorem~\ref{thm:equiv}.

\begin{thm}\label{thm:equiv}
Assume the synthetic probabilistic model from~\eqref{eq:synth_sys} and let $A$ be linear. Then the PPM is equivalent to the data-fitting proximal map, i.e.
\begin{equation}
F_{\sigma, \gamma}(v; y) = \tilde{F}_{\sigma, \gamma}(v; y) \text{ for all $v$ and $y$.}\end{equation}
\end{thm}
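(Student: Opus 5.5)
The plan is to compute the posterior of $\tilde{X}$ given $\tilde{Y}=y$ explicitly and show that it is Gaussian. For a Gaussian, the mean coincides with the mode, so the conditional expectation in \eqref{eq:cond-exp} equals the MAP estimate. The MAP estimate was already identified with the minimizer in \eqref{eq:pm}.

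\textbf{Step 1: the log-posterior.} Since $A$ is linear, write $A(x)=Ax$ for a matrix $A\in\mathbb{R}^{M\times N}$. From \eqref{eq:synth_sys}, the prior on $\tilde X$ is $\mathcal{N}(v,\gamma^2 I)$ and the likelihood is $\tilde Y\mid \tilde X=x \sim \mathcal{N}(Ax,\sigma^2 I)$, using independence of $\tilde{\mathcal{E}}$ and $\tilde W$. Bayes' rule then gives
\begin{equation*}
-\log p(x\mid \tilde Y=y) = \frac{1}{2\sigma^2}\|y-Ax\|^2 + \frac{1}{2\gamma^2}\|x-v\|^2 + c(y),
\end{equation*}
where $c(y)$ does not depend on $x$. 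This confirms that $F_{\sigma,\gamma}(v;y)$ is the posterior mode.

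\textbf{Step 2: complete the square.} Define
\begin{equation*}
H=\frac{1}{\sigma^2}A^TA+\frac{1}{\gamma^2}I, \qquad b=\frac{1}{\sigma^2}A^Ty+\frac{1}{\gamma^2}v.
\end{equation*}
The quadratic in $x$ can be rewritten as $\tfrac12 (x-H^{-1}b)^T H (x-H^{-1}b) + c'(y)$. Because $\gamma>0$, $H$ is symmetric positive definite and hence invertible. The posterior density is therefore proportional to $\exp(-\tfrac12 (x-\mu)^TH(x-\mu))$ with $\mu = H^{-1}b$. After normalization, this is exactly the $\mathcal{N}(\mu,H^{-1})$ density.

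\textbf{Step 3: mean equals mode.} The mean of $\mathcal{N}(\mu,H^{-1})$ is $\mu$, so $\tilde F_{\sigma,\gamma}(v;y)=\mu$. The objective in \eqref{eq:pm} is strictly convex with Hessian $H\succ 0$. Its unique minimizer therefore solves the first-order condition $Hx=b$, which gives $F_{\sigma,\gamma}(v;y)=\mu$. This holds for all $v$ and $y$.

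The only step needing care is Step 2. There one must use the positive definiteness of $H$ from $\gamma>0$, since $A^TA$ may be singular. This guarantees both that the posterior is a proper Gaussian and that the minimizer is unique. The rest is routine algebra, and I expect no real obstacle.
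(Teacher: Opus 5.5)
Your proposal is correct and follows the same route as the paper. You write the negative log-posterior via Bayes' rule, note that for linear $A$ it is a quadratic with positive definite Hessian $\frac{1}{\sigma^2}A^TA+\frac{1}{\gamma^2}I$, and conclude that the posterior is a nondegenerate Gaussian whose mean (the PPM) equals its mode (the proximal map); your explicit completion of the square, showing both equal $H^{-1}b$, only spells out a step the paper asserts without computation.
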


\begin{proof}
Under the synthetic probabilistic model defined in~\eqref{eq:synth_sys}, Bayes' rule gives
\begin{equation}
    -\log p_{v,\gamma}(x \mid \tilde{Y} = y)
    = \frac{1}{2\sigma^2}\|y-Ax\|^2
    + \frac{1}{2\gamma^2}\|x-v\|^2,
\end{equation}
up to an additive constant with respect to $x$.

Since $A$ is linear, this is quadratic in $x$ with positive definite Hessian $\frac{1}{\sigma^2}A^TA + \frac{1}{\gamma^2}I.$
So, the posterior $p(x\mid \tilde{Y} = y)$ is a nondegenerate Gaussian whose mean and mode coincide.
Therefore, since the PPM and the data-fitting proximal map are exactly the mean and mode of this posterior, 
\begin{equation}
    F_{\sigma,\gamma}(v;y)=\tilde{F}_{\sigma,\gamma}(v;y)
\end{equation}
for all $v$ and $y$.
\end{proof}

We emphasize that the Gaussian prior used in the synthetic probabilistic model of~\eqref{eq:synth_sys} is used \textit{only} to provide a probabilistic interpretation of the data-fitting proximal map and to define the proposed PPM. 
It is not assumed to represent the true prior distribution of the original inverse problem. 
Thus, the proposed PPM can be incorporated into reconstruction algorithms with arbitrary image priors.

We can implement the PPM as a network trained to minimize the mean square error (MSE) loss, since the MMSE estimator is equivalent to the conditional expectation~\cite{chan2021textbook}.
We define the deep PPM $\tilde{F}_{\sigma, \gamma}^\theta(v; y)$ in residual form as
\begin{equation}\label{eq:ppm_net}
    \tilde{F}_{\sigma, \gamma}^\theta(v; y)
    = v + \tilde{H}_{\sigma, \gamma}^\theta(z),
\end{equation}
where $z = A^T(y - Av)$ is the back-projected residual and $\tilde{H}_{\sigma, \gamma}^\theta$ is a network with weights $\theta$. 
The weights are trained to minimize the MSE loss given by
\begin{equation}\label{eq:loss}
    \mathcal{L}(\theta) = \frac{1}{K}\sum_{k=0}^{K-1}
    \|\gamma\epsilon_k
    - \tilde{H}_{\sigma, \gamma}^\theta(z_k)\|^2,
\end{equation}
with $\epsilon_k, w_k$ sampled independently from $\mathcal{N}(0, I)$ and $z_k$ the corresponding back-projected residual.

When $A$ is linear, the residual satisfies
\begin{equation}
    y - Av = \gamma A\epsilon + \sigma w.
\end{equation} 
Importantly, this residual depends only on sampled Gaussian noise and the operator $A$.
Thus, by formulating the deep PPM in residual form, we remove any dependence of the trained network on the input image $v$.
In particular, $\tilde{H}_{\sigma, \gamma}^\theta$ is independent of the input $v$ and no ground-truth images are required for training.

We note that this independence is a consequence of linearity.
For nonlinear $A$, the residual $y - A(v)$ retains its dependence on $v$, so training requires either samples of $v$ from a representative distribution or a linearization of $A$ about $v$.
The PPM remains well-defined in that setting, but its training and verification are beyond the scope of this paper.

\section{Implementation Details}
\label{sec:setup}

\subsection{Forward Operators}

We evaluate the deep PPM on Gaussian deblurring and $4\times$ SR, both of which admit an analytic solution to the data-fitting proximal map that can be efficiently computed in the Fourier domain~\cite{chan2016plug}.
We call this the analytic proximal map, and use it as the reference proximal map.
For Gaussian deblurring, $A$ is convolution with a $7 \times 7$ Gaussian kernel of standard deviation $1.0$.
For $4\times$ SR, $A$ is convolution with a $13 \times 13$ Gaussian kernel of standard deviation $2.0$ followed by $4\times$ spatial subsampling.

We also evaluate the deep PPM on XCT reconstruction, which does not admit an analytic solution to the data-fitting proximal map.
In this case, we use a CG solution run to relative tolerance $10^{-6}$ as the reference proximal map.
For XCT reconstruction, $A$ is a parallel-beam projection operator with uniformly spaced angles between 0 and 180 degrees, implemented with the ASTRA toolbox~\cite{vanaarle2015astra}.

\subsection{Training Details}

We train one deep PPM network per operator, fixing $\sigma = 0.01$ and $\gamma = 0.05$ for all tasks.
The network $\tilde{H}^\theta_{\sigma,\gamma}$ is built on a DRUNet architecture~\cite{zhang2021drunet} that takes the back-projected residual $z$ as input and outputs the estimated correction $\gamma\epsilon$.
Training pairs $(z_k, \gamma\epsilon_k)$ are generated on the fly, with $\epsilon_k$ and $w_k$ drawn fresh at every step, so no image data of any kind enters training.
Each epoch consists of 10{,}000 training samples at the operator's native size ($256 \times 256$ for deblurring and SR, $128 \times 128$ for XCT), and the validation loss is computed on a fixed held-out set of 1{,}000 samples.
We minimize the MSE loss given in \eqref{eq:loss} with Adam~\cite{kingma2015adam} for 100 epochs at batch size 96, with an initial learning rate of $10^{-4}$, five epochs of linear warmup, and cosine annealing to $10^{-6}$.
We use the checkpoint with the lowest validation loss for all experiments.

\subsection{Testing Details}

For deblurring and SR, we use 50 crops of size $256 \times 256$ from the DIV2K validation set~\cite{agustsson2017div2k}.
For XCT, we use 28 slices from the LoDoPaB-CT dataset~\cite{leuschner2021lodopab}, downsampled to size $128\times 128$.
We emphasize that none of these images are used in training; each PPM is trained on sampled Gaussian noise alone.

We compare the deep PPM against CG and GD, each run for at most 50 iterations with relative convergence tolerance $10^{-6}$.
All three proximal map variants are evaluated first as standalone proximal maps and then as the data-fitting agent within MACE~\cite{buzzard2018plug, sridhar2020distributed} with 10 iterations using a BM3D denoiser~\cite{dabov2007bm3d} as the prior agent.
The proximal-map input $v$ and the initial guess for MACE are the blurry image for deblurring,  $4\times$ bicubic upsampled image for SR, and filtered backprojection (FBP) reconstruction for XCT.

For a proximal map variant $P$, we report the residual NRMSE with respect to the reference proximal map, i.e.
\begin{equation}\label{eq:res_nrmse}
    \text{Res-NRMSE} = \frac{\|(P(v;y) - v) - (F_{\sigma,\gamma}(v;y) - v)\|}{\|F_{\sigma,\gamma}(v;y) - v\|}.
\end{equation}
The Res-NRMSE measures the error in the correction rather than in the full output and is therefore not dominated by the shared input $v$.
For MACE we report the NRMSE of the reconstruction relative to the MACE reconstruction obtained with the reference proximal map, and relative to ground truth.
We additionally report mean wall-clock time per proximal map evaluation and per MACE run, measured on a single NVIDIA A100 80GB GPU. 

\section{Experimental Results}
\label{sec:results}
\begin{table}[t]
\centering
\caption{Quantitative comparison of each proximal map variant against the reference proximal map, both as a standalone map and as the data-fitting agent in MACE. The reference is the analytic proximal map for deblurring and SR and the CG solution for XCT. For Res-NRMSE / NRMSE, ``ref'' and ``GT'' denote error relative to the reference proximal map and to ground truth, respectively. Reported errors are averaged over 50 DIV2K crops (deblurring, SR) or 28 LoDoPaB-CT slices (XCT). Time for the proximal map is the mean per evaluation; time for MACE reconstruction is the mean wall-clock time of a full 10-iteration run. }
\label{tab:results}
\setlength{\tabcolsep}{4pt}
\footnotesize
\begin{tabular}{@{}c c r r r r r@{}}
\toprule
 & & \multicolumn{2}{c}{Proximal map} & \multicolumn{3}{c}{MACE reconstruction} \\
\cmidrule(lr){3-4} \cmidrule(lr){5-7}
Task & Method & \makecell{Res-NRMSE\\(\%; ref)} & \makecell{Time\\(ms)} & \makecell{NRMSE\\(\%; ref)} & \makecell{NRMSE\\(\%; GT)} & \makecell{Time\\(s)} \\
\midrule
\multirow{4}{*}{\makecell[c]{Deblur}}
 & Analytic & {--} & {--} & {--} & 5.63 & 21.6 \\
 & CG & 0.02 & 8.9 & 0.20 & 5.63 & 21.8 \\
 & GD & 0.05 & 40.0 & 0.21 & 5.63 & 22.1 \\
 & Deep PPM & 6.02 & 7.6 & 0.34 & 5.64 & 21.7 \\
\midrule
\multirow{4}{*}{\makecell[c]{4$\times$ SR}}
 & Analytic & {--} & {--} & {--} & 13.28 & 21.2 \\
 & CG & 0.01 & 3.7 & 0.03 & 13.28 & 21.2 \\
 & GD & 0.01 & 55.6 & 0.04 & 13.28 & 22.0 \\
 & Deep PPM & 5.88 & 7.5 & 0.14 & 13.29 & 21.2 \\
\midrule
\multirow{3}{*}{\makecell[c]{XCT}}
 & CG & {--} & 710.1 & {--} & 7.20 & 9.7 \\
 & GD & 0.10 & 6011.6 & 0.11 & 7.20 & 62.3 \\
 & Deep PPM & 11.77 & 38.7 & 0.70 & 7.29 & 4.3 \\
\bottomrule
\end{tabular}
\end{table}

Table~\ref{tab:results} compares the deep PPM to the reference proximal map, both as a standalone proximal map and as a data-fitting agent in MACE.
As a standalone proximal map, the deep PPM reproduces the correction $F_{\sigma,\gamma}(v;y) - v$ to a Res-NRMSE of approximately $6.0\%$ for both deblurring and SR.
For XCT, the Res-NRMSE with respect to CG is 11.8\%, about twice that of the other operators. 
Within MACE, the reconstruction obtained with the deep PPM differs from the reconstruction with the reference proximal map by less than 1\% on average for all three tasks.
Measured against ground truth, all proximal map variants reach the same reconstruction quality as the reference proximal map to within 0.1 percentage points.
Additionally, the cost of the deep PPM is a single network evaluation and does not depend on the conditioning of $A$.
For deblurring and SR this offers no advantage, since the analytic solution and CG each take only a few milliseconds; these tasks serve as verification benchmarks rather than speed benchmarks.
For XCT, however, the deep PPM replaces a 710\,ms CG solve (6.0\,s for GD) with a 38.7\,ms evaluation, an $18\times$ reduction in the cost of each data-fitting step.
Over a full MACE run the speedup is $2.3\times$ over CG and $14.5\times$ over GD.

\begin{figure*}[t]
    \centering\includegraphics[width=0.93\linewidth]{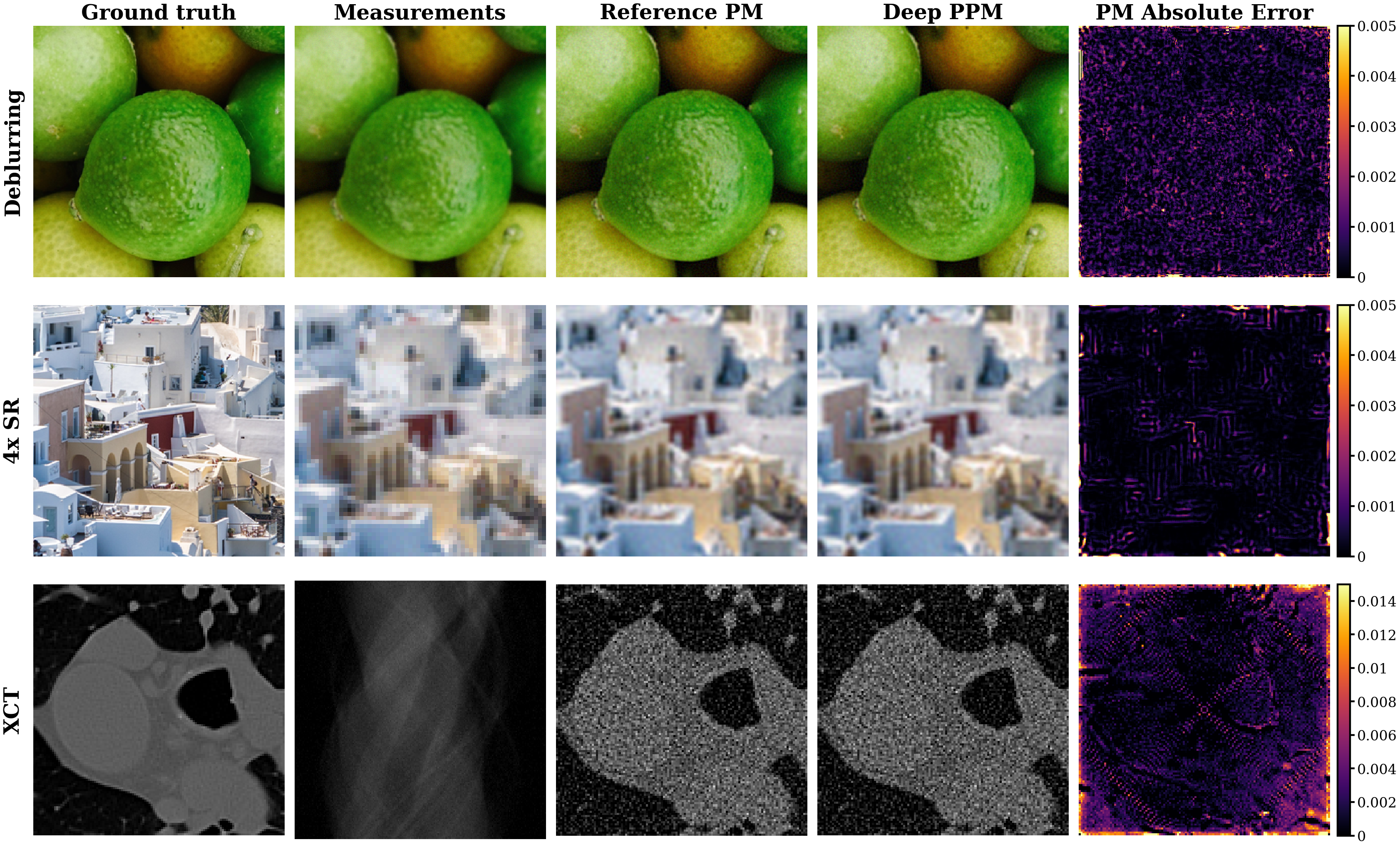}
    \caption{Comparison of deep PPM and reference proximal map for each operator. From left to right, each row displays the ground truth, measurement $y$, reference proximal map output $F_{\sigma,\gamma}(v;y)$, deep PPM output $\tilde{F}^\theta_{\sigma,\gamma}(v;y)$, and proximal map absolute error $|F_{\sigma,\gamma}(v;y) - \tilde{F}^\theta_{\sigma,\gamma}(v;y)|$ (mean over color channels). The reference is the analytic proximal map for deblurring and SR, and the CG solution for XCT. In all three cases the deep PPM output is visually indistinguishable from the reference.}
    \label{fig:prox}
\end{figure*}

Figure~\ref{fig:prox} compares outputs of the deep PPM with the reference proximal map for each operator.
In all three cases, the deep PPM output is visually indistinguishable from the reference proximal map, supported by the small absolute error maps.
For deblurring, the error is spatially unstructured noise.
For SR, the error concentrates on edges and periodic texture, which is exactly the high-frequency content that the downsampling operator suppresses and that the correction must restore.
For XCT the error is larger and spatially structured, concentrating along streak-like features and near the boundary of the reconstruction region, consistent with the higher Res-NRMSE in Table~\ref{tab:results}.

\begin{figure*}[t]
    \centering
    \includegraphics[width=0.93\linewidth]{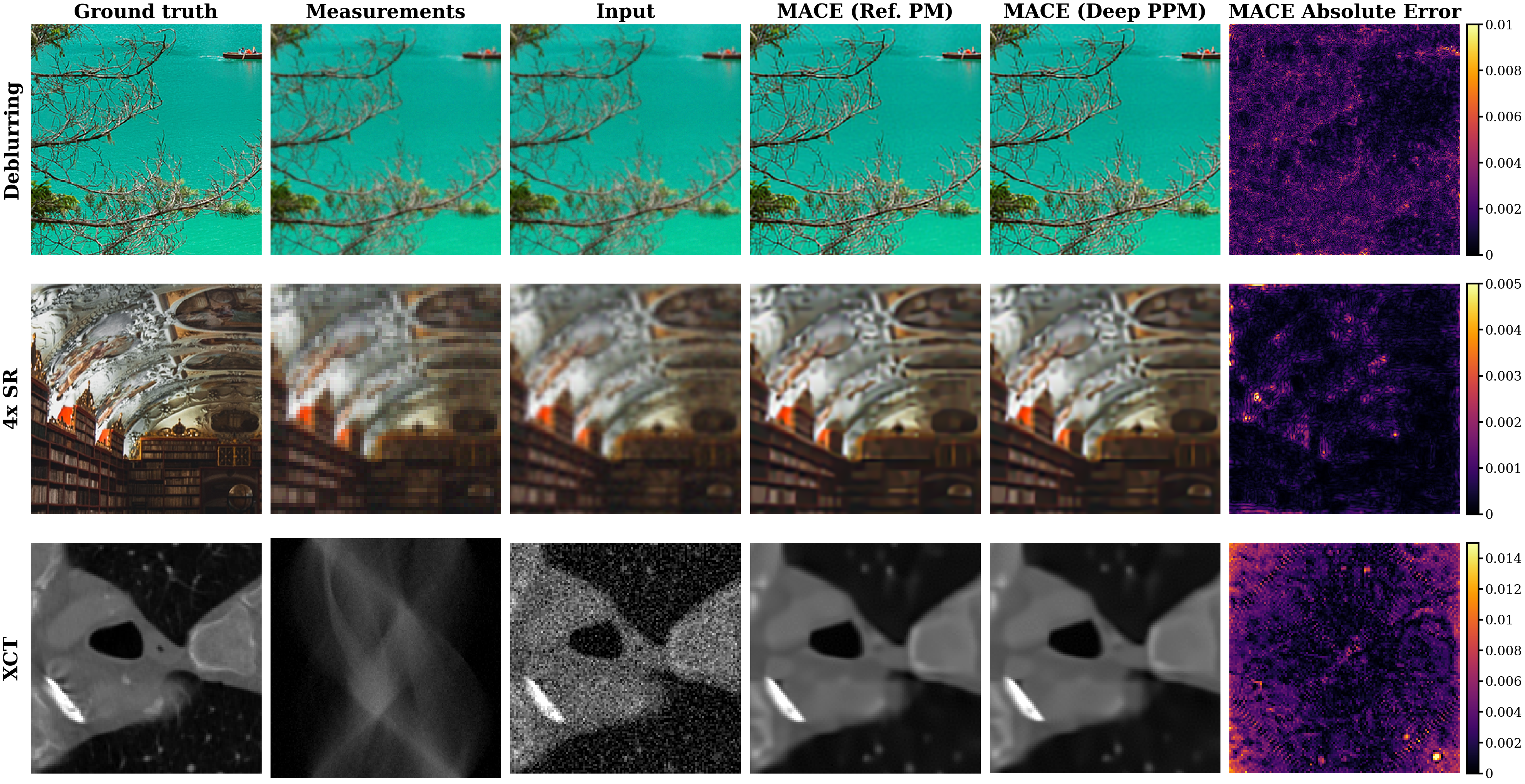}
    \caption{Comparison of MACE reconstructions using the reference proximal map and the deep PPM as the data-fitting agent, with a BM3D prior after 10 iterations. From left to right, each row displays the ground truth, measurement $y$, MACE initialization (blurry image, bicubically upsampled image, or FBP reconstruction), MACE reconstruction with the reference proximal map $x_{\mathrm{ref}}$, MACE reconstruction with the deep PPM $x_{\mathrm{PPM}}$, and the absolute difference $|x_{\mathrm{ref}} - x_{\mathrm{PPM}}|$ (mean over color channels). The reference is the analytic proximal map for deblurring and SR, and the CG solution for XCT. The two reconstructions are visually indistinguishable in all three cases.}
    \label{fig:mace}
\end{figure*}

Figure~\ref{fig:mace} compares MACE reconstructions using the deep PPM and the reference proximal map.
Replacing the reference proximal map with the deep PPM produces no visible change in the reconstructions, and the absolute error maps are very small compared to the $[0,1]$ pixel intensity scale.
The maps are mostly unstructured for deblurring and SR, but they contain faint residual streak structure for XCT.

\begin{figure*}[!t]
    \centering
    \includegraphics[width=\linewidth]{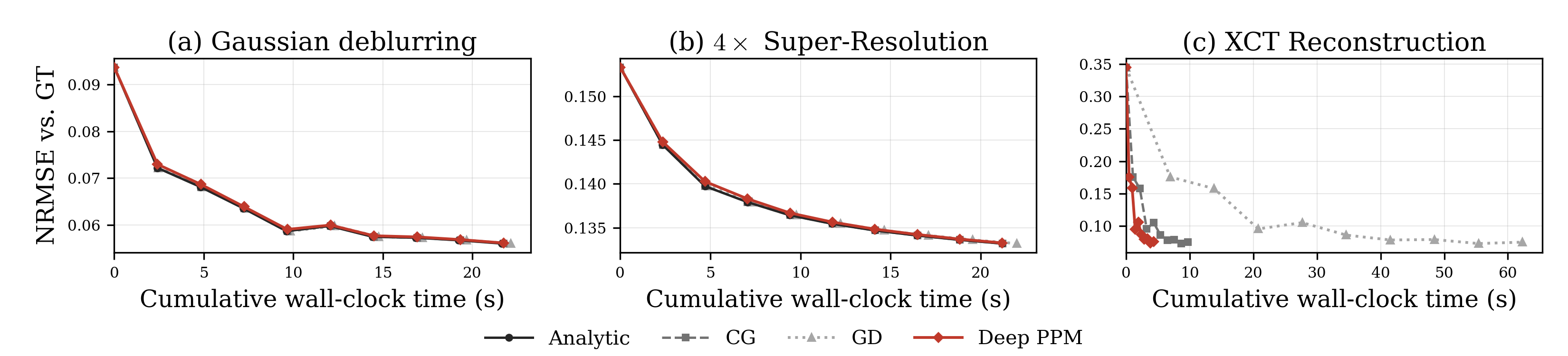}
    \caption{NRMSE with respect to ground truth versus cumulative wall-clock time over 10 MACE iterations, averaged over test images, using each proximal map variant as the data-fitting agent. For deblurring and SR the four trajectories coincide, and all variants finish in the same time since the BM3D prior dominates the per-iteration cost. For XCT the deep PPM follows the CG trajectory to the same equilibrium in 4.3 s, compared with 9.7 s for CG and 62 s for GD.}
    \label{fig:mace_conv}
\end{figure*}

Figure~\ref{fig:mace_conv} plots NRMSE of MACE iterates using each proximal map variant, plotted against wall-clock time.
For deblurring and SR, the deep PPM curve is indistinguishable from the analytic, CG, and GD curves at every iteration, and all four finish at the same time since BM3D accounts for nearly all of the per-iteration cost.
For XCT the deep PPM tracks the CG trajectory at every iteration and reaches the same NRMSE in 4.3\,s, compared with 9.7\,s for CG and 62\,s for GD, whose inner loops dominate the run time.

\section{Conclusion}
In this paper, we introduced the PPM, a principled reformulation of the data-fitting proximal map as the MMSE estimate of a synthetic probabilistic model that enables training the deep PPM using only sampled Gaussian noise and black-box access to $A$ and $A^T$.
The PPM is provably equal to the proximal map when $A$ is linear, and the deep PPM closely matches the analytic proximal maps for deblurring and SR as well as the CG proximal map for XCT.
Used as the data-fitting agent in MACE, the deep PPM reconstruction matches the reference reconstruction to within 1\% NRMSE on average across all three operators, and for XCT it replaces the inner CG loop with a network evaluation that is $18\times$ faster.
Since the formulation is well-defined for any $A$, the PPM extends naturally to nonlinear forward models, where data-fitting proximal maps are most expensive; training and verifying the deep PPM for non-linear operators is the subject of future work.

\bibliographystyle{IEEEtran}
\bibliography{ref}

\end{document}